\newtheorem*{theorNo}{Theorem}
\newcommand{\cL}{\mathcal{L}}
\newcommand{\bOm}{\boldsymbol{\Omega}}
\newcommand{\bcP}{\boldsymbol{\mathcal{P}}}
\newcommand{\bQ}{\boldsymbol{Q}}
\newcommand{\bS}{\boldsymbol{S}}
\newcommand{\bs}{\boldsymbol{s}}
\newcommand{\bq}{\boldsymbol{q}}
\newcommand{\bx}{\boldsymbol{x}}
\newcommand{\bby}{\boldsymbol{y}}
\newcommand{\bz}{\boldsymbol{z}}
\newcommand{\bone}{\boldsymbol{1}}
\newcommand{\BBZ}{\mathbb{Z}}
\newcommand{\dd}{\partial}
\newcommand{\Id}{{\mathrm d}}
\newcommand{\dvol}{{\mathrm{d%\,
vol}}}
\newcommand{\lshad}{[\![}
\newcommand{\rshad}{]\!]}
\title{The Jacobi identity for graded\/-\/commutative variational Schouten bracket revisited}
\author{A.~V.~Kiselev}
\date{\textit{E-mail: {A.V.Kiselev\symbol{"40}rug.nl}}} %{December 15, 2013}
\begin{document}
\maketitle

%%% UDC.

\begin{abstract}\noindent%
This short note contains an explicit proof of the Jacobi identity for variational Schouten bracket in $\mathbb{Z}_2$-\/graded
commutative setup; an extension of the reasoning and assertion to the noncommutative geometry of cyclic words 
(see~\cite{SQS11}) is immediate, still making the proof longer. 
We emphasize that for the reasoning to be rigorous, 
it must refer to the product bundle geometry %of the calculus 
of iterated variations (see~\cite{gvbv}); on the other hand,
no \emph{ad hoc} regularizations occur anywhere in this theory.
\end{abstract}
 %%% in Russian:

\paragraph*{Introduction.}
%\noindent%
The Jacobi identity for variational Schouten bracket $\lshad\,,\,\rshad$ is its key property in several cohomological theories. 
For example, one infers that the BV-\/Laplacian $\Delta$ or quantum BV-\/op\-e\-ra\-t\-or 
$\bOm^{\hbar}=\boldsymbol{i}\hbar\,\Delta+\lshad\bS^{\hbar},\cdot\,\rshad$
are differentials in the Batalin\/--\/Vilkovisky formalism 
(available literature is immense; let us refer 
to~\cite{gvbv} and~\cite{BV}) or one deduces that $\dd_{\bcP}=\lshad\bcP,\cdot\,\rshad$ yields the Pois\-son\/--\/Lich\-ne\-r\-o\-wicz
complex for every variational Poisson bi-\/vector~$\bcP$, 
see~\cite{SQS11}. %[\S\,9]{TwelveLectures}. 
Likewise, a realization of 
zero\/-\/curvature geometry for the inverse scattering via the classical 
mas\-ter\/-\/equ\-a\-ti\-on $\lshad\bS,\bS\rshad=0$
%, see~\cite{Norway13}, 
opens a way for deformation quantization, which is not restricted to the BV-\/quantization of 
Chern\/--\/Simons models over threefolds.%
\footnote{In fact, all these BV-{}, Poisson, or IST~models are examples of variational Lie algebroids~\cite{Galli10} and
their encoding by~$\bQ^2=0$. The construction of gauge automorphisms for the 
$\bQ$-\/cohomology determines the next 
generation of such structures, with new deformation quantization parameters beyond the Planck constant.}
Therefore, it is mandatory to have a clear vision of the geometry of iterated variations and understand the mechanism for
validity of the Jacobi identity.
%(cf.~\cite{LavrovTyutinJacobi13} and references therein).

A self\/-\/regularized calculus of variations, including the definitions 
of~$\Delta$ and $\lshad\,,\,\rshad$ and a rigorous 
proof of their interrelations, is developed in~\cite{gvbv}. 
We reserved that theory's key element, the proof of 
Theorem~4.(iii) with Jacobi's identity for $\lshad\,,\,\rshad$, 
to a separate paper which is this note. Referring to~\cite{gvbv} for detail and
discussion, let us recall that --\,in a theory of variations for fields over the space\/-\/time\,-- each integral functional%
\footnote{Let all functionals that take field configurations to number be 
\emph{integral} in this note; formal (sums of)
products of functionals such as $\exp\bigl(\tfrac{\boldsymbol{i}}{\hbar}\bS^{\hbar}\bigr)$ are dealt with by using the Leibniz rule,
see~\cite[\S\,2.5]{gvbv}.}
or every test shift of the fields brings its own copy of the domain of integration into the setup; the locality of
couplings between (co)\/vectors attached at the domains' points ensures a restriction to diagonals in the accumulated products
of bundles, whereas the operational definitions of~$\Delta$ 
and $\lshad\,,\,\rshad$ are on\/-\/the\/-\/diagonal reconfigurations
of such couplings.%
\footnote{It is readily seen from the proof of theorem below and from example
%~\ref{ExJacobi} 
on p.~\pageref{ExJacobi} that
composite\/-\/structure objects such as brackets of functionals retain a kind of memory of the way how they were produced;
in effect, variational derivatives detect the traces of original objects' individual geometries, whence a variation within
one of them does not mar %sic!
any of the others.}
We expect that the reader is familiar with the concept and notation 
from~\S\,1--2.4 in~\cite{gvbv}. In particular, we let
the notation for total derivatives which stem from integrations by parts keep track of the variations' arguments, so that
$\bigl((\delta s){\overleftarrow
{\dd}}\!\!
/\dd\bby\bigr)(\bby)\cdot\overrightarrow
{\dd}\!
\cL\bigl(\bx,[\bq],[\bq^{\dagger}]\bigr)/\dd\bq_{\bx}$
at $\bby=\bx$ becomes
$\delta s(\bby)\cdot\bigl(-\overrightarrow
{\Id}\!\!
/\Id\bby)(\overrightarrow
{\dd}\!
\cL\bigl(\bx,[\bq],[\bq^{\dagger}]\bigr)/\dd\bq_{\bx})$
on that diagonal, see Example on p.~\pageref{ExJacobi} %below 
and Example~2.4 on pp.~34--36 of~\cite{gvbv}. Similarly, the variational
derivatives with respect to (anti)\/fields~$\bq$ or~$\bq^{\dagger}$ keep track of the test shifts which those variations come
from: e.g., the formula above yields% 
\footnote{In this note we let the arrow over a variational derivative indicate the direction along which all %partial and total 
derivatives act --- but not the opposite direction along which the test shifts were transported prior to any
integration by parts (cf.~\cite{gvbv}); we thus have
%\begin{multline*}
$\overrightarrow{\delta\boldsymbol{s}}\,(\bS)=\int\!\Id\bby\,\bigl\{\bigl\langle\delta s(\bby),
\overrightarrow{\delta}\!\!/\delta\bq(\bby)\bigl(\bS(\bx)\bigr)\bigr\rangle+
\bigl\langle\delta s^{\dagger}(\bby),
\overrightarrow{\delta}\!\!/\delta\bq^{\dagger}(\bby)\bigl(\bS(\bx)\bigr)\bigr\rangle\bigr\}$ and $(\bS)\,\overleftarrow{\delta\boldsymbol{s}}
=\int\!\Id\bby\,\bigl\{\bigl\langle\bigl(\bS(\bx)\bigr)\overleftarrow{\delta}\!\!/\delta\bq(\bby),\delta s(\bby)\bigr\rangle+
\bigl\langle\bigl(\bS(\bx)\bigr)\overleftarrow{\delta}\!\!/\delta\bq^{\dagger}(\bby),\delta s^{\dagger}(\bby)\bigr\rangle\bigr\}$,
%\end{multline*}
where the diagonal $\bby=\bx$ is wrought by the coupling $\langle\,,\,\rangle$, see~\cite[\S\,2.2--3]{gvbv}, and we
display %explicate 
the integration variable $\bx$ in the functional~$\bS$.%
}
a term in $\delta s(\bby)\cdot{\overrightarrow
{\delta}}\!\!
/\delta\bq(\bby)\bigl(\cL\bigl(\bx,[\bq],[\bq^{\dagger}]\bigr)\bigr)$
at~$\bby=\bx$. 
%The logic of our reasoning thus becomes transparent.
This simplifies the reasoning.%
\footnote{With a bit more care taken of the order in which the factors
follow each other in products, and by using the $\mathbb{Z}_2$-\/graded Leibniz rule for left-{} and right\/-\/directed
derivations, we show that the claim and proof of the main theorem hold true in the setup of cyclic words and brackets
of necklaces (see~\cite{SQS11} and references therein).%
%; this will be the object of subsequent paper~\cite{Lorentz12}.
}

\begin{theorNo}
Let $F$,\ $G$, and~$H$ be $\BBZ_2$-\/parity homogeneous functionals\textup{;} 
denote by $|\cdot|$ the grading so that $(-)^{|\cdot|}$
is the parity. The variational Schouten bracket $\lshad\,,\,\rshad$ satisfies the shifted\/-\/graded Jacobi identity 
\textup{(}cf.~Eq.~\textup{(28)} in Theorem~4.(iii) on p.~30 versus Eq.~\textup{(36)} on p.~37 in~\textup{\cite{gvbv}),}
\begin{equation}\label{Jacobi4Schouten}
\lshad F,\lshad G,H\rshad\rshad=\lshad\lshad F,G\rshad,H\rshad+
\smash{(-)^{(|F|-1)(|G|-1)}}\,\lshad G,\lshad F,H\rshad\rshad.
\end{equation}
%In other words, 
The operator $\lshad F,\,\cdot\,\rshad$ is a graded derivation of 
$\lshad\,,\,\rshad$%the bracket
\textup{:}\ %and 
identity~\eqref{Jacobi4Schouten} is the Leibniz rule for~it.
\end{theorNo}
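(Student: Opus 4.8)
The plan is to compute both sides of~\eqref{Jacobi4Schouten} directly from the definition of $\lshad\,,\,\rshad$ and to match the resulting terms. Recall that $\lshad F,G\rshad$ is assembled by coupling, through $\langle\,,\,\rangle$, the variational derivative of $F$ with respect to $\bq$ (resp.\ $\bq^{\dagger}$) against that of $G$ with respect to $\bq^{\dagger}$ (resp.\ $\bq$), after which the coupling restricts the two copies of the integration domain carried by $F$ and by $G$ to their diagonal. The first step is to expand the outer bracket in $\lshad F,\lshad G,H\rshad\rshad$: its variational derivatives act on $F$ and on the composite $\lshad G,H\rshad$. By the $\BBZ_2$\/-\/graded Leibniz rule, the derivative of $\lshad G,H\rshad$ with respect to $\bq(\bby)$ or $\bq^{\dagger}(\bby)$ splits into a summand differentiating the $G$\/-\/copy and one differentiating the $H$\/-\/copy. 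Here the product bundle geometry is essential: because each functional retains the memory of its own domain, the test shift transported into $\lshad G,H\rshad$ reaches $G$ and $H$ separately and does not mar the coupling already formed between them.

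Carrying out this expansion for all three brackets in~\eqref{Jacobi4Schouten}, I would sort the resulting terms according to how the two outer variations distribute over $F$, $G$, and~$H$. Two families arise. In the first, each of the three functionals receives exactly one variational derivative, so the terms are triple couplings of first derivatives. In the second, both outer variations land on a single functional, producing a genuine second variational derivative of the type $\delta^2/\delta\bq\,\delta\bq^{\dagger}$ (of $F$, of $G$, or of $H$) that lives on the product of two copies of that functional's domain.

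The second step is to verify that the first family matches across the identity. After reordering the couplings into a common normal form and accounting both for the graded\/-\/commutation signs and for the prefactor $(-)^{(|F|-1)(|G|-1)}$ dictated by the shifted grading $|\cdot|-1$ in which $\lshad\,,\,\rshad$ is anti\/-\/symmetric, the single\/-\/derivative terms on the left\/-\/hand side reproduce exactly those coming from $\lshad\lshad F,G\rshad,H\rshad$ and from $(-)^{(|F|-1)(|G|-1)}\lshad G,\lshad F,H\rshad\rshad$.

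The main obstacle, and the heart of the matter, is the second family. These second\/-\/derivative terms are precisely the ones that a naive pointwise calculus would render ill\/-\/defined --- two variations evaluated at coinciding points --- and which one is then tempted to discard by an \emph{ad hoc} regularization. I would instead keep them honest by working on the product of the two domains, where the mixed second variational derivative is symmetric by construction. Combined with the graded antisymmetry $\lshad G,H\rshad=-(-)^{(|G|-1)(|H|-1)}\lshad H,G\rshad$ of the bracket, this symmetry forces the second\/-\/derivative contributions to cancel in pairs once the couplings restrict them to the diagonal, with no regularization anywhere. Establishing this cancellation rigorously --- that is, checking that the geometry really does identify the two would\/-\/be singular terms before any evaluation on the diagonal --- is the step demanding the most care. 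Once both families are settled, \eqref{Jacobi4Schouten} follows; read with $F$ held fixed and $\lshad G,H\rshad$ as the argument, it is precisely the assertion that $\lshad F,\,\cdot\,\rshad$ is a graded derivation of $\lshad\,,\,\rshad$.
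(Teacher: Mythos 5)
Your decomposition of the terms rests on a miscount, and this breaks the whole argument. Each side of~\eqref{Jacobi4Schouten} is assembled from \emph{four} variational derivatives --- two from the inner bracket and two from the outer one --- distributed over three functionals. In the left\/-\/hand side, one outer derivative always falls on~$F$ and the other falls, by the Leibniz rule, on either the $G$\/-\/copy or the $H$\/-\/copy inside $\lshad G,H\rshad$, which already carry one derivative each from the inner bracket. Hence \emph{every} term has exactly one functional bearing a second variational derivative: the distributions are $(1,2,1)$ or $(1,1,2)$ on the left, $(2,1,1)$ or $(1,2,1)$ in $\lshad\lshad F,G\rshad,H\rshad$, and $(2,1,1)$ or $(1,1,2)$ in $\lshad G,\lshad F,H\rshad\rshad$. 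Your ``first family'' of triple couplings of first derivatives, with distribution $(1,1,1)$, is empty, so the matching you propose for it is vacuous and carries none of the burden of the proof.

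The claimed fate of the ``second family'' is also wrong, and in a way that cannot be repaired by more care on the diagonal: if \emph{all} second\/-\/derivative terms cancelled in pairs, then both sides of~\eqref{Jacobi4Schouten} would vanish identically, since every term contains a second derivative. The actual mechanism, which the paper makes explicit, is asymmetric in the three functionals: the terms with a second derivative of~$G$ occur in the left\/-\/hand side \emph{and} in $\lshad\lshad F,G\rshad,H\rshad$ and must be shown to \emph{match}; the terms with a second derivative of~$H$ occur in the left\/-\/hand side \emph{and} in $(-)^{(|F|-1)(|G|-1)}\lshad G,\lshad F,H\rshad\rshad$ and must also match; only the terms with a second derivative of~$F$ --- which appear in the right\/-\/hand side alone, four in each summand --- cancel pairwise between the two right\/-\/hand\/-\/side summands. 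Moreover, the cancellation is not a consequence of the graded antisymmetry of $\lshad\,,\,\rshad$, as you assert; it follows from the graded permutability of the two parity\/-\/odd variations acting on~$F$ (symmetry of the second variational derivative up to the sign $(-)^{|F|-1}$ or $(-)^{|F|-2}$ from transport around the argument), combined with the relabelling $\bby\rightleftarrows\bz$ of integration variables, i.e.\ the swap of test shifts, which is legitimate only because the bracket is independent of the choice of normalized test shifts. Without identifying which terms match and which cancel, and without this swap\/-\/and\/-\/sign bookkeeping in the product\/-\/bundle geometry, the proof does not go through.
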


\begin{proof}
The logic is straightforward%
\footnote{Obviously, the l.-h.s. of~\eqref{Jacobi4Schouten} does \emph{not} contain second variational derivatives of~$F$
whereas the r.-h.s.\ \emph{does}. We show that it is precisely these terms and none others which cancel out in the r.-h.s.}
as soon as the matching of (co)\/vectors and reconfigurations of couplings are understood in~\cite[\S\,1--2]{gvbv}. 
We consider first the l.-h.s.\ of~\eqref{Jacobi4Schouten}. 
By construction, we have that 
$%\[%/-J.5-
\lshad G,H\rshad=\bigl(G(\bx_2)\bigr)%\frac
{\overleftarrow{\delta}}\!\!/{\delta\bq(\bby_2)}\cdot
%\frac
{\overrightarrow{\delta}}\!\!/{\delta\bq^\dagger(\bby_3)}\bigl(H(\bx_3)\bigr)
-\bigl(G(\bx_2)\bigr)%\frac
{\overleftarrow{\delta}}\!\!/{\delta\bq^\dagger(\bby_2)}\cdot%\frac
{\overrightarrow{\delta}}\!\!/{\delta\bq(\bby_3)}\bigl(H(\bx_3)\bigr)%.
$. %\]
Now expanding
$%\[
\lshad F,\lshad G,H\rshad\rshad=
\bigl(F(\bx_1)\bigr)%\frac
{\overleftarrow{\delta}}\!\!/{\delta\bq(\bz_1)}\cdot%\frac
{\overrightarrow{\delta}}\!\!/{\delta\bq^{\dagger}(\bz_{23})}
\bigl(\lshad G,H\rshad\bigr)-
\bigl(F(\bx_1)\bigr)%\frac
{\overleftarrow{\delta}}\!\!/{\delta\bq^{\dagger}(\bz_1)}\cdot%\frac
{\overrightarrow{\delta}}\!\!/{\delta\bq(\bz_{23})}
\bigl(\lshad G,H\rshad\bigr)%,
$, %\]
we obtain the sum of eight enumerated terms:%(which we enumerate for future use):
\footnote{We denote by $\bz_{ij}$ the integration variables which label the variations falling --\,in the outer brackets in~\eqref{Jacobi4Schouten}\,-- 
on the $i$\/th or $j$\/th functional by the Leibniz rule 
(let $F$ be first and so on, $1\leqslant i< j\leqslant 3$); 
for convenience, we highlight $i$ in $\bz_{ij}$ when the variation falls on the $i$\/th functional --- and $j$~otherwise.}
\begin{align*}%/-J.5- %\frac, <1>...<8>
{}^{\langle 1 \rangle}&\ %
F(\bx_1)%\frac
{\overleftarrow{\delta}}\!\!/{\delta\bq(\bz_1)}\cdot%\frac
{\overrightarrow{\delta}}\!\!/{\delta\bq^\dagger(\bz_{\boldsymbol{2}3})}\, 
G(\bx_2) %\frac
{\overleftarrow{\delta}}\!\!/{\delta\bq(\bby_2)}\cdot%\frac
{\overrightarrow{\delta}}\!\!/{\delta\bq^\dagger(\bby_3)}\, H(\bx_3)+ {}
\\
{}{}^{\langle 2 \rangle}&\quad %
+(-)^{|G|}\,F(\bx_1)%\frac
{\overleftarrow{\delta}}\!\!/{\delta\bq(\bz_1)}\cdot
G(\bx_2) %\frac
{\overleftarrow{\delta}}\!\!/{\delta\bq(\bby_2)}\cdot%\frac
{\overrightarrow{\delta}}\!\!/{\delta\bq^\dagger(\bz_{2\boldsymbol{3}})}%\frac
{\overrightarrow{\delta}}\!\!/{\delta\bq^\dagger(\bby_3)}\, H(\bx_3)- {}
\\
{}^{\langle 3 \rangle}&\quad %
-F(\bx_1)%\frac
{\overleftarrow{\delta}}\!\!/{\delta\bq(\bz_1)}\cdot%\frac
{\overrightarrow{\delta}}\!\!/{\delta\bq^\dagger(\bz_{\boldsymbol{2}3})}\,
\left(G(\bx_2) %\frac
{\overleftarrow{\delta}}\!\!/{\delta\bq^\dagger(\bby_2)}\right)\cdot %\frac
{\overrightarrow{\delta}}\!\!/{\delta\bq(\bby_3)}\, H(\bx_3)- {}
\\
{}^{\langle 4 \rangle}&\quad %
-(-)^{|G|-1}\,F(\bx_1) %\frac
{\overleftarrow{\delta}}\!\!/{\delta\bq(\bz_1)}\cdot G(\bx_2) %\frac
{\overleftarrow{\delta}}\!\!/{\delta\bq^\dagger(\bby_2)}\cdot%\frac
{\overrightarrow{\delta}}\!\!/{\delta\bq^\dagger(\bz_{2\boldsymbol{3}})}%\frac
{\overrightarrow{\delta}}\!\!/{\delta\bq(\bby_3)}\, H(\bx_3)- {}
\\
{}^{\langle 5 \rangle}&\quad %
-F(\bx_1)%\frac
{\overleftarrow{\delta}}\!\!/{\delta\bq^\dagger(\bz_1)}\cdot%\frac
{\overrightarrow{\delta}}\!\!/{\delta\bq(\bz_{\boldsymbol{2}3})}\,
G(\bx_2) %\frac
{\overleftarrow{\delta}}\!\!/{\delta\bq(\bby_2)}\cdot%\frac
{\overrightarrow{\delta}}\!\!/{\delta\bq^\dagger(\bby_3)}\, H(\bx_3)- {}
\\
{}^{\langle 6 \rangle}&\quad %
-F(\bx_1)%\frac
{\overleftarrow{\delta}}\!\!/{\delta\bq^\dagger(\bz_1)}\cdot
G(\bx_2) %\frac
{\overleftarrow{\delta}}\!\!/{\delta\bq(\bby_2)}\cdot%\frac
{\overrightarrow{\delta}}\!\!/{\delta\bq(\bz_{2\boldsymbol{3}})} %\frac
{\overrightarrow{\delta}}\!\!/{\delta\bq^\dagger(\bby_3)}\, H(\bx_3)+ {}
\\
{}^{\langle 7 \rangle}&\quad %
+F(\bx_1)%\frac
{\overleftarrow{\delta}}\!\!/{\delta\bq^\dagger(\bz_1)}\cdot%\frac
{\overrightarrow{\delta}}\!\!/{\delta\bq(\bz_{\boldsymbol{2}3})}\,
G(\bx_2) %\frac
{\overleftarrow{\delta}}\!\!/{\delta\bq^\dagger(\bby_2)}\cdot%\frac
{\overrightarrow{\delta}}\!\!/{\delta\bq(\bby_3)}\, H(\bx_3)+ {}
\\
{}^{\langle 8 \rangle}&\quad %
+F(\bx_1)%\frac
{\overleftarrow{\delta}}\!\!/{\delta\bq^\dagger(\bz_1)}\cdot
G(\bx_2) %\frac
{\overleftarrow{\delta}}\!\!/{\delta\bq^\dagger(\bby_2)}\cdot%\frac
{\overrightarrow{\delta}}\!\!/{\delta\bq(\bz_{2\boldsymbol{3}})} %\frac
{\overrightarrow{\delta}}\!\!/{\delta\bq(\bby_3)}\, H(\bx_3).
\end{align*}
Arguing as above, we see that the %first 
term $\lshad\lshad F,G\rshad,H\rshad$ in 
the r.-h.s.\ of~\eqref{Jacobi4Schouten} is%
\footnote{The labelling of terms by superscripts 
$\langle1\rangle$\,--\,$\langle8\rangle$ shows their matching with summands
in the l.-h.s.\ of~\eqref{Jacobi4Schouten} or, for the index running from 
$\langle9\rangle$ to $\langle12\rangle$, points at
the four %possible 
second\/-\/order variations of $F$ which cancel out in the two r.-h.s.\ summands in Jacobi's identity.}

\begin{align*}%/-J.5- %\frac, <1>...<8>
%\lshad\lshad F,G\rshad,H\rshad= /-J.6-.
{}^{\langle 9 \rangle}&\ %
F(\bx_1)%\frac
{\overleftarrow{\delta}}\!\!/{\delta\bq(\bby_1)}
{\overleftarrow{\delta}}\!\!/{\delta\bq(\bz_{\boldsymbol{1}2})}\cdot%\frac
{\overrightarrow{\delta}}\!\!/{\delta\bq^\dagger(\bby_2)}\, G(\bx_2)\cdot
{\overrightarrow{\delta}}\!\!/{\delta\bq^\dagger(\bz_3)}\, H(\bx_3)+ {}
\\
{}^{\langle 1 \rangle}&\quad %
+F(\bx_1){\overleftarrow{\delta}}\!\!/{\delta\bq(\bby_1)}\cdot
{\overrightarrow{\delta}}\!\!/{\delta\bq^\dagger(\bby_2)}\, G(\bx_2)
{\overleftarrow{\delta}}\!\!/{\delta\bq(\bz_{1\boldsymbol{2}})}\cdot
{\overrightarrow{\delta}}\!\!/{\delta\bq^\dagger(\bz_3)}\, H(\bx_3)- {}
\\
{}^{\langle 10 \rangle}&\quad %
-F(\bx_1){\overleftarrow{\delta}}\!\!/{\delta\bq^\dagger(\bby_1)}
{\overleftarrow{\delta}}\!\!/{\delta\bq(\bz_{\boldsymbol{1}2})}\cdot
{\overrightarrow{\delta}}\!\!/{\delta\bq(\bby_2)}\, G(\bx_2)\cdot
{\overrightarrow{\delta}}\!\!/{\delta\bq^\dagger(\bz_3)}\, H(\bx_3)- {}
\\
{}^{\langle 5 \rangle}&\quad %
-F(\bx_1){\overleftarrow{\delta}}\!\!/{\delta\bq^\dagger(\bby_1)}\cdot
{\overrightarrow{\delta}}\!\!/{\delta\bq(\bby_2)}\, G(\bx_2)
{\overleftarrow{\delta}}\!\!/{\delta\bq(\bz_{1\boldsymbol{2}})}\cdot
{\overrightarrow{\delta}}\!\!/{\delta\bq^\dagger(\bz_3)}\, H(\bx_3)- {}
\\
{}^{\langle 11 \rangle}&\quad %
-(-)^{|G|-1}\,F(\bx_1){\overleftarrow{\delta}}\!\!/{\delta\bq(\bby_1)}
{\overleftarrow{\delta}}\!\!/{\delta\bq^\dagger(\bz_{\boldsymbol{1}2})}\cdot
{\overrightarrow{\delta}}\!\!/{\delta\bq^\dagger(\bby_2)}\, G(\bx_2)\cdot
{\overrightarrow{\delta}}\!\!/{\delta\bq(\bz_3)}\, H(\bx_3)- {}
\\
{}^{\langle 3 \rangle}&\quad %
-F(\bx_1){\overleftarrow{\delta}}\!\!/{\delta\bq(\bby_1)}\cdot
\smash{\left({\overrightarrow{\delta}}\!\!/{\delta\bq^\dagger(\bby_2)}\, G(\bx_2)\right)}
{\overleftarrow{\delta}}\!\!/{\delta\bq^\dagger(\bz_{1\boldsymbol{2}})}\cdot
{\overrightarrow{\delta}}\!\!/{\delta\bq(\bz_3)}\, H(\bx_3)+ {}
\\
{}^{\langle 12 \rangle}&\quad %
+(-)^{|G|}\,F(\bx_1){\overleftarrow{\delta}}\!\!/{\delta\bq^\dagger(\bby_1)}
{\overleftarrow{\delta}}\!\!/{\delta\bq^\dagger(\bz_{\boldsymbol{1}2})}\cdot
{\overrightarrow{\delta}}\!\!/{\delta\bq(\bby_2)}\, G(\bx_2)\cdot
{\overrightarrow{\delta}}\!\!/{\delta\bq(\bz_3)}\, H(\bx_3)+ {}
\\
{}^{\langle 7 \rangle}&\quad %
+F(\bx_1){\overleftarrow{\delta}}\!\!/{\delta\bq^\dagger(\bby_1)}\cdot
{\overrightarrow{\delta}}\!\!/{\delta\bq(\bby_2)}\, G(\bx_2)
{\overleftarrow{\delta}}\!\!/{\delta\bq^\dagger(\bz_{\boldsymbol{1}2})}\cdot
{\overrightarrow{\delta}}\!\!/{\delta\bq(\bz_3)}\, H(\bx_3).
\end{align*}
In the same way, we obtain the term $\lshad G,\lshad F,H\rshad\rshad$ not yet multiplied by the extra sign factor:
%$(-)^{(|F|-1)(|G|-1)}$; we have that
\begin{align*}%\lshad G,\lshad F,H,\rshad\rshad= /-J.7-.
{}^{\{ 1 \} }&\ %
G(\bx_2) {\overleftarrow{\delta}}\!\!/{\delta\bq(\bz_2)} \cdot
{\overrightarrow{\delta}}\!\!/{\delta\bq^\dagger(\bz_{\boldsymbol{1}3})}
\,F(\bx_1) {\overleftarrow{\delta}}\!\!/{\delta\bq(\bby_1)} \cdot
{\overrightarrow{\delta}}\!\!/{\delta\bq^\dagger(\bby_3)}\, H(\bx_3)+ {}
\\
{}^{\{ 2 \} }&\quad %
+(-)^{|F|}\, G(\bx_2) {\overleftarrow{\delta}}\!\!/{\delta\bq(\bz_2)} \cdot
F(\bx_1) {\overleftarrow{\delta}}\!\!/{\delta\bq(\bby_1)} \cdot
{\overrightarrow{\delta}}\!\!/{\delta\bq^\dagger(\bz_{1\boldsymbol{3}})}
{\overrightarrow{\delta}}\!\!/{\delta\bq^\dagger(\bby_3)}\, H(\bx_3) - {}
\\
{}^{\{ 3 \} }&\quad %
- G(\bx_2) {\overleftarrow{\delta}}\!\!/{\delta\bq(\bz_2)} \cdot
{\overrightarrow{\delta}}\!\!/{\delta\bq^\dagger(\bz_{\boldsymbol{1}3})}
\,\smash{\left(F(\bx_1) {\overleftarrow{\delta}}\!\!/{\delta\bq^\dagger(\bby_1)}\right)}
\cdot {\overrightarrow{\delta}}\!\!/{\delta\bq(\bby_3)}\, H(\bx_3) - {}
\\
{}^{\{ 4 \} }&\quad %
-(-)^{|F|-1}\, G(\bx_2) {\overleftarrow{\delta}}\!\!/{\delta\bq(\bz_2)} \cdot
F(\bx_1) {\overleftarrow{\delta}}\!\!/{\delta\bq^\dagger(\bby_1)} \cdot
{\overrightarrow{\delta}}\!\!/{\delta\bq^\dagger(\bz_{1\boldsymbol{3}})}
{\overrightarrow{\delta}}\!\!/{\delta\bq(\bby_3)}\, H(\bx_3) - {}
\\
{}^{\{ 5 \} }&\quad %
- G(\bx_2) {\overleftarrow{\delta}}\!\!/{\delta\bq^\dagger(\bz_2)} \cdot
{\overrightarrow{\delta}}\!\!/{\delta\bq(\bz_{\boldsymbol{1}3})}
\,F(\bx_1) {\overleftarrow{\delta}}\!\!/{\delta\bq(\bby_1)} \cdot
{\overrightarrow{\delta}}\!\!/{\delta\bq^\dagger(\bby_3)}\, H(\bx_3) - {}
\\
{}^{\{ 6 \} }&\quad %
- G(\bx_2) {\overleftarrow{\delta}}\!\!/{\delta\bq^\dagger(\bz_2)} \cdot
F(\bx_1) {\overleftarrow{\delta}}\!\!/{\delta\bq(\bby_1)} \cdot
{\overrightarrow{\delta}}\!\!/{\delta\bq(\bz_{1\boldsymbol{3}})}
{\overrightarrow{\delta}}\!\!/{\delta\bq^\dagger(\bby_3)}\, H(\bx_3) + {}
\\
{}^{\{ 7 \} }&\quad %
+ G(\bx_2) {\overleftarrow{\delta}}\!\!/{\delta\bq^\dagger(\bz_2)} \cdot
{\overrightarrow{\delta}}\!\!/{\delta\bq(\bz_{\boldsymbol{1}3})}
\,F(\bx_1) {\overleftarrow{\delta}}\!\!/{\delta\bq^\dagger(\bby_1)} \cdot
{\overrightarrow{\delta}}\!\!/{\delta\bq(\bby_3)}\, H(\bx_3) + {}
\\
{}^{\{ 8 \} }&\quad %
+ G(\bx_2) {\overleftarrow{\delta}}\!\!/{\delta\bq^\dagger(\bz_2)} \cdot
F(\bx_1) {\overleftarrow{\delta}}\!\!/{\delta\bq^\dagger(\bby_1)} \cdot
{\overrightarrow{\delta}}\!\!/{\delta\bq(\bz_{1\boldsymbol{3}})}
{\overrightarrow{\delta}}\!\!/{\delta\bq(\bby_3)}\, H(\bx_3).
\end{align*}
Let us now use the $\BBZ_2$-\/graded commutativity assumption for the setup.
%(but not a more restrictive hypothesis that the
%$\BBZ_2$-\/graded cyclic invariance is the only admissible property, see~\cite{SQS11}; this would require more control over positioning
%of (un)\/locks and joins in arguments of the inner and outer brackets in~\eqref{Jacobi4Schouten}). 
Transporting the variations
of $F$ leftmost, we restore the lexicographic order $F\prec G\prec H$. Finally, we multiply 
$\lshad G,\lshad F,H,\rshad\rshad$, reordered as above, by the sign factor $(-)^{(|F|-1)(|G|-1)}$; this yields%
\footnote{For each term labelled by $\{1\}$\,--\,$\{8\}$ in $\lshad G,\lshad F,H,\rshad\rshad$,\ let us calculate the product
of three signs:\ one which was written near the respective summand,\ the other which comes from the reorderings to
$F\prec G$,\ and thirdly,\ $(-)^{(|F|-1)(|G|-1)}$;\ here is the list: %table:
%\[  /-J.8-.
$\{1\}$: $(-)^{(|F|-1)\cdot|G|} (-)^{(|F|-1)(|G|-1)} = (-)^{|F|-1}$,\quad
$\{2\}$: $(-)^{|F|}(-)^{|F|\cdot|G|} (-)^{(|F|-1)(|G|-1)} = (-)^{|G|-1}$,\quad
$\{3\}$: $-(-)^{(|F|-2)\cdot|G|} (-)^{(|F|-1)(|G|-1)} = (-)^{|F|+|G|}$,\quad
$\{4\}$: $-(-)^{|F|-1}(-)^{(|F|-1)\cdot|G|} (-)^{(|F|-1)(|G|-1)}=-1$,\quad
$\{5\}$,\ $\{6\}$: $-(-)^{|F|\cdot(|G|-1)} (-)^{(|F|-1)(|G|-1)} = (-)^{|G|}$,\quad
$\{7\}$,\ $\{8\}$: $(-)^{(|F|-1)\cdot(|G|-1)}(-)^{(|F|-1)(|G|-1)}=+1$.%\]
}
%... this yields%
\begin{align*}%/-J.8-.
{}^{\langle 10 \rangle}&\ %
(-)^{|F|-1}\,
{\overrightarrow{\delta}}\!\!/{\delta\bq^\dagger(\bz_{\boldsymbol{1}3})}
\,F(\bx_1) {\overleftarrow{\delta}}\!\!/{\delta\bq(\bby_1)} \cdot
G(\bx_2) {\overleftarrow{\delta}}\!\!/{\delta\bq(\bz_2)} \cdot
{\overrightarrow{\delta}}\!\!/{\delta\bq^\dagger(\bby_3)}\, H(\bx_3) + {}
\\
{}^{\langle 2 \rangle}&\quad %
+(-)^{|G|-1}\,
F(\bx_1) {\overleftarrow{\delta}}\!\!/{\delta\bq(\bby_1)} \cdot
G(\bx_2) {\overleftarrow{\delta}}\!\!/{\delta\bq(\bz_2)} \cdot
{\overrightarrow{\delta}}\!\!/{\delta\bq^\dagger(\bz_{1\boldsymbol{3}})}
{\overrightarrow{\delta}}\!\!/{\delta\bq^\dagger(\bby_3)}\, H(\bx_3) + {}
\\
{}^{\langle 12 \rangle}&\quad %
+ (-)^{|F|+|G|}\,
{\overrightarrow{\delta}}\!\!/{\delta\bq^\dagger(\bz_{\boldsymbol{1}3})}
\,\smash{\left(F(\bx_1) {\overleftarrow{\delta}}\!\!/{\delta\bq^\dagger(\bby_1)}\right)}
\cdot G(\bx_2) {\overleftarrow{\delta}}\!\!/{\delta\bq(\bz_2)} \cdot
{\overrightarrow{\delta}}\!\!/{\delta\bq(\bby_3)}\, H(\bx_3) - {}
\\
{}^{\langle 6 \rangle}&\quad %
- F(\bx_1) {\overleftarrow{\delta}}\!\!/{\delta\bq^\dagger(\bby_1)} \cdot
G(\bx_2) {\overleftarrow{\delta}}\!\!/{\delta\bq(\bz_2)} \cdot
{\overrightarrow{\delta}}\!\!/{\delta\bq^\dagger(\bz_{1\boldsymbol{3}})}
{\overrightarrow{\delta}}\!\!/{\delta\bq(\bby_3)}\, H(\bx_3) + {}
\\
{}^{\langle 9 \rangle}&\quad %
+ (-)^{|G|}\,
{\overrightarrow{\delta}}\!\!/{\delta\bq(\bz_{\boldsymbol{1}3})}
\,F(\bx_1) {\overleftarrow{\delta}}\!\!/{\delta\bq(\bby_1)} \cdot
G(\bx_2) {\overleftarrow{\delta}}\!\!/{\delta\bq^\dagger(\bz_2)} \cdot
{\overrightarrow{\delta}}\!\!/{\delta\bq^\dagger(\bby_3)}\, H(\bx_3) + {}
\\
{}^{\langle 4 \rangle}&\quad %
+ (-)^{|G|}\,
F(\bx_1) {\overleftarrow{\delta}}\!\!/{\delta\bq(\bby_1)} \cdot
G(\bx_2) {\overleftarrow{\delta}}\!\!/{\delta\bq^\dagger(\bz_2)} \cdot
{\overrightarrow{\delta}}\!\!/{\delta\bq(\bz_{1\boldsymbol{3}})}
{\overrightarrow{\delta}}\!\!/{\delta\bq^\dagger(\bby_3)}\, H(\bx_3) + {}
\\
{}^{\langle 11 \rangle}&\quad %
+ {\overrightarrow{\delta}}\!\!/{\delta\bq(\bz_{\boldsymbol{1}3})}
\,F(\bx_1) {\overleftarrow{\delta}}\!\!/{\delta\bq^\dagger(\bby_1)} \cdot
G(\bx_2) {\overleftarrow{\delta}}\!\!/{\delta\bq^\dagger(\bz_2)} \cdot
{\overrightarrow{\delta}}\!\!/{\delta\bq(\bby_3)}\, H(\bx_3) + {}
\\
{}^{\langle 8 \rangle}&\quad %
+ F(\bx_1) {\overleftarrow{\delta}}\!\!/{\delta\bq^\dagger(\bby_1)} \cdot
G(\bx_2) {\overleftarrow{\delta}}\!\!/{\delta\bq^\dagger(\bz_2)} \cdot
{\overrightarrow{\delta}}\!\!/{\delta\bq(\bz_{1\boldsymbol{3}})}
{\overrightarrow{\delta}}\!\!/{\delta\bq(\bby_3)}\, H(\bx_3).
\end{align*}
Terms $\langle1\rangle$\,--\,$\langle8\rangle$ are present in the r.-h.s.\ of~\eqref{Jacobi4Schouten} and terms
$\langle9\rangle$\,--\,$\langle12\rangle$ cancel out;\ it is only the indices $\langle3\rangle$ and $\langle12\rangle$ which
require special attention.\ %Namely, 
Consider $\langle3\rangle$ in $\lshad\lshad F,G\rshad,H\rshad$;\ % 
by re\-la\-bel\-ling 
the integration variables, $\bby\rightleftarrows\bz$ (i.e.,\ %by 
swapping the test shifts,\ see p.\,\pageref{ExJacobi}),\ we obtain
\begin{equation}\label{EqTerm3}%[%/-J.9,
-F(\bx_1)\smash{ {\overleftarrow{\delta}}\!\!/{\delta\bq(\bz_1)} }\cdot
\smash{ \left({\overrightarrow{\delta}}\!\!/{\delta\bq^\dagger(\bz_{1\boldsymbol{2}})}\, G(\bx_2)\right) }
\smash{ {\overleftarrow{\delta}}\!\!/{\delta\bq^\dagger(\bby_2)} }\cdot
\smash{ {\overrightarrow{\delta}}\!\!/{\delta\bq(\bby_3)} }\, H(\bx_3).
\end{equation}%]
The variation's argument in parentheses has grading $|G|-1$, which yields the sign factor $(-)^{(|G|-1)-1}$ when the left\/-\/acting
parity\/-\/odd variation $\overleftarrow{\delta}\!\!/\delta\bq^{\dagger}(\bby_2)$ is brought to the other side of its argument,
becoming $\overrightarrow{\delta}\!\!/\delta\bq^{\dagger}(\bby_2)$. 
Hence %The middle term is thus
%\begin{multline*}
$(-)^{|G|-2}%\frac
{\overrightarrow{\delta}}\!\!/{\delta\bq^{\dagger}(\bby_2)}%\left
\bigl(%\frac
{\overrightarrow{\delta}}\!\!/{\delta\bq^{\dagger}(\bz_{\boldsymbol{2}3})}
\,\bigl(G(\bx_2)\bigr)%\right
\bigr)
\stackrel{(i)}{=}
(-)^{|G|-1}%\frac
{\overrightarrow{\delta}}\!\!/{\delta\bq^{\dagger}(\bz_{\boldsymbol{2}3})}%\left
\bigl(%\frac
{\overrightarrow{\delta}}\!\!/{\delta\bq^{\dagger}(\bby_{2})}\,
\bigl(G(\bx_2)\bigr)%\right
\bigr)%\\
\stackrel{(ii)}{=}
(-)^{|G|-1}(-)^{|G|-1}%\frac
{\overrightarrow{\delta}}\!\!/{\delta\bq^{\dagger}(\bz_{\boldsymbol{2}3})}%\left
\bigl(
\bigl(G(\bx_2)\bigr)%\frac
{\overleftarrow{\delta}}\!\!/{\delta\bq^{\dagger}(\bby_{2})}%\right
\bigr)$,
%\end{multline*}
where $(i)$ the parity\/-\/odd variations are swapped and $(ii)$ the inner variational derivative is transported around $G$ of grading $|G|$. 
The two sign factors cancel out, and the overall minus matches that near $\langle3\rangle$ in the l.-h.s.\ of~\eqref{Jacobi4Schouten}.

We do the same with $\langle12\rangle$. Consider such term in
$(-)^{(|F|-1)(|G|-1)}\lshad G,\lshad F,H\rshad\rshad$;
clearly, the factor $(-)^{|G|}$ is irrelevant because it is present also near $\langle12\rangle$ in
$\lshad\lshad F,G\rshad,H\rshad$. Transporting the parity\/-\/odd variation 
$\smash{ \overrightarrow{\delta}\!\!/\delta\bq^{\dagger}(\bz_{\bone3}) }$ around the object of grading $|F|-1$ in parentheses, we gain the
factor $(-)^{|F|-2}$, which cancels out with $(-)^{|F|}$. 
Next, relabel $\bby\rightleftarrows\bz$, which gives
\begin{equation}\label{EqTerm12}%[%/-J.9-.
 %\,\smash{\left(
F(\bx_1) \smash{ {\overleftarrow{\delta}}\!\!/{\delta\bq^\dagger(\bz_{\boldsymbol{1}3})} }
 %\right)}
\smash{ {\overleftarrow{\delta}}\!\!/{\delta\bq^\dagger(\bby_1)} }
\cdot G(\bx_2) \smash{ {\overleftarrow{\delta}}\!\!/{\delta\bq(\bby_2)} }\cdot
\smash{ {\overrightarrow{\delta}}\!\!/{\delta\bq(\bz_{1\boldsymbol{3}})} }\, H(\bx_3).
\end{equation}%]
The parity\/-\/odd variations follow in the order which is reverse with respect to that in $\langle12\rangle$ in 
$\lshad\lshad F,G\rshad,H\rshad$, hence these terms cancel out. The proof is complete.%This concludes the proof.
\end{proof}

\paragraph*{Conclusion.}
Variations $\delta\bs$ act via graded Leibniz rule on products of integral functionals,\ e.g.,\ $F\cdot\lshad G,H\rshad$;\ within %each 
composite objects
like $\lshad G,H\rshad$,\ they act also by derivation w.r.t.\ own geometries of
the blocks $G,H$; variations are graded\/-\/permutable in each block.\ %
Neither $\Delta$ nor $\lshad\,,\,\rshad$ depend on a choice of normalized test shift $\delta\bs$. This yields~\eqref{Jacobi4Schouten} and $\Delta^2(F\cdot G\cdot H)=0$.

\paragraph*{Acknowledgements.}
The author thanks the Organizing committee
of International workshop SQS'13 
`Supersymmetry and Quantum Symmetries'
(July~29 -- August~3, 2013; JINR Dubna, Russia)
%for a welcome and warm atmosphere during the meeting.
for stimulating %helpful 
discussions and partial financial support. %remarks. %\quad 

This research was supported in part by JBI~RUG project~103511 (Groningen). 
A~part of this research was done while the author was visiting at 
the $\smash{\text{IH\'ES}}$ (Bures\/-\/sur\/-\/Yvette); 
the financial support and hospitality of this institution are gratefully acknowledged.

\newpage%\mbox{ }
\setcounter{page}{1}
\renewcommand{\thepage}{\roman{page}}
\paragraph*{Appendix: an example.}\label{ExJacobi}
Let us illustrate the validity mechanism for Jacobi identity~\eqref{Jacobi4Schouten} by verifying it at three given
functionals. For simplicity, let there be just one independent variable~$x$, 
one parity\/-\/even coordinate~$q$ and its 
parity\/-\/odd canonically conjugate~$q^{\dagger}$. Set 
\[
F=\int q^{\dagger}qq_{x_1x_1}\,\dvol(x_1),\quad
G=\int q^{\dagger}_{x_2}\exp(q_{x_2})\,\dvol(x_2),\quad
\text{and } H=\int q^{\dagger}_{x_3x_3}\cos q\,\dvol(x_3);
\]
we note that the functionals $F$ and~$H$ re\/-\/appear in~\cite[pp.\,34--36]{gvbv}. We have $|F|=1$ and $|G|=1$, whence 
$(-)^{(|F|-1)(|G|-1)}=+1$ in~\eqref{Jacobi4Schouten}.
 
Let $\delta\bs_1=(\delta s_1,\delta s_1^{\dagger})$ and 
$\delta\bs_2=(\delta s_2,\delta s_2^{\dagger})$ be two normalized test shifts, i.\,e., suppose that 
$\delta s_{\alpha}(y)\cdot\delta s_{\alpha}^{\dagger}(y)=1$
at every~$y$ for $\alpha=1,2$. We recall from Lemma~1 in~\cite[p.~24]{gvbv} 
that the values of Schouten brackets
in~\eqref{Jacobi4Schouten} are independent of a concrete choice of the normalized functional coefficients 
$\delta s_{\alpha}$ and $\delta s^{\dagger}_{\alpha}$, which implies that the test shifts $\delta\bs_1$ and $\delta\bs_2$
in the inner and outer brackets can be swapped (this would amount to relabelling $y\rightleftarrows z$ of their arguments,
see~\eqref{EqTerm3} and~\eqref{EqTerm12} on p.~\pageref{EqTerm3}).
In the same way as we did in Example~2.4 in~\cite{gvbv}, we now do not write the basic (co)\/vectors $\vec{e}(y)$ and
$\vec{e}^{\mathstrut\,\,\dagger}(y)$ in expansions of the test shifts and differentials of densities of the functionals 
(see~\cite[\S\,2.2--3]{gvbv} for detail); it is enough to know the 
couplings' values, which are~$\pm1$.

We have that%
\footnote{Let us recall that integrations by parts, which cast the derivatives off the test shifts, are performed only 
when all the objects --\,such as the l.-h.s.\ or r.-h.s.\ of~\eqref{Jacobi4Schouten}\,-- are fully composed, all partial
derivatives of the functionals' densities are calculated, and reconfigurations of the couplings are ready to start.
In practice, this means that partial derivatives like $\overrightarrow{\dd}\!\!/\dd q_x$ or 
$\overleftarrow{\dd}\!\!/\dd q_{xx}^{\dagger}$ dive under $\overrightarrow{\Id}\!\!/\Id y$ or 
$\overleftarrow{\Id}\!\!/\Id z$ 
because those total derivatives %operators 
have not yet appeared at the places where we write them ahead of time.}
%We have that%
$\lshad G,H\rshad={}$%\iiiint \Id y_2\,\Id y_3\,\Id x_2\,\dvol(x_3)\cdot$
\begin{multline*}%/-J.1-
%\lshad G,H\rshad=
\smash{\iiiint} \Id y_2\,\Id y_3\,\Id x_2\,\dvol(x_3)\cdot
\Bigl\{\Bigl\langle\bigl(-\tfrac{\Id}{\Id y_2}\bigr)
\bigl(\underbrace{ q^\dagger_{x_2}\,\exp(q_{x_2}) }_{x_2}\bigr)\cdot
\tfrac{\Id^2}{\Id y_3^2}\,(\underbrace{ \cos q }_{x_3})\Bigr\rangle \cdot
\underbrace{\langle \delta s(y_2),\delta s^\dagger(y_3) \rangle}_{+1}
+{}\\
+\Bigl\langle \bigl(-\tfrac{\Id}{\Id y_2}\bigr)
%\smash
{ \bigl(\underbrace{\exp(q_{x_2})}_{x_2}\bigr)\cdot\underbrace{q^\dagger_{x_3x_3}\cdot(-\sin q)}_{x_3} }\Bigr\rangle
\cdot \smash{ \underbrace{\langle \delta s^\dagger(y_2),\delta s(y_3) \rangle}_{-1} }
\Bigr\};
\end{multline*}
as usual, we display the integration variables~$x_i$ under the remnants of respective densities. Next, we obtain that
$\lshad F,\lshad G,H\rshad\rshad={}$
\begin{align*}%/J_1,-J.2-. <1>--<8>.
{}&\smash{ \int\!\Id z_1\int\!\Id z_{23}
\int\!\Id y_2\int\!\Id y_3\int\!\Id x_1\int\!\Id x_2\int\!\dvol(x_3) }\cdot
\underbrace{\langle \delta s(z_1),\delta s^\dagger(z_{23})\rangle}_{+1}\cdot{}
\\
{}&\ \Bigl\{
\Bigl\langle\bigl(\underbrace{{}^{\langle 1\rangle}\ %
q^\dagger q_{x_1x_1} + {}^{\langle 2\rangle}\ %
\tfrac{\Id^2}{\Id z_1^2}(q^\dagger q)}_{x_1}\bigr) \cdot
\bigl(-\tfrac{\Id}{\Id z_{\boldsymbol{2}3}}\bigr)
\bigl(-\tfrac{\Id}{\Id y_2}\bigr)\bigl(\underbrace{\exp(q_{x_2})}_{x_2}\bigr)
\cdot \tfrac{\Id^2}{\Id y_3^2} (\underbrace{\cos q}_{x_3})\Bigr\rangle \cdot
\underbrace{\langle \delta s(y_2),\delta s^\dagger(y_3)\rangle}_{+1} +{}
\\
{}&{}\quad
+\Bigl\langle\bigl(\underbrace{{}^{\langle 3\rangle}\ %
q^\dagger q_{x_1x_1} + {}^{\langle 4\rangle}\ %
\tfrac{\Id^2}{\Id z_1^2}(q^\dagger q)}_{x_1}\bigr) \cdot
\bigl(-\tfrac{\Id}{\Id y_2}\bigr)\bigl(\underbrace{\exp(q_{x_2})}_{x_2}\bigr)
\cdot %\bigl(
\tfrac{\Id^2}{\Id z_{2\boldsymbol{3}}^2 } %\bigr)
(\underbrace{-\sin q}_{x_3})\Bigr\rangle \cdot
%\underbrace{\langle \delta s(z_1),\delta s^\dagger(z_{\boldsymbol{2}3})\rangle}_{+1}\cdot
\underbrace{\langle \delta s^\dagger(y_2),\delta s(y_3)\rangle}_{-1} 
\Bigr\} +{}
\\
{}&{}+\int\!\Id z_1\int\!\Id z_{23}
\int\!\Id y_2\int\!\Id y_3\int\!\Id x_1\int\!\Id x_2\int\!\dvol(x_3)\cdot
\underbrace{\langle \delta s^\dagger(z_1),\delta s(z_{23})\rangle}_{-1}\cdot{}
\\
{}&\ \Bigl\{\Bigl\langle
\bigl(\underbrace{{}^{\langle 5\rangle}\ q q_{x_1x_1} }_{x_1}\bigr) \cdot
\bigl(-\tfrac{\Id}{\Id z_{\boldsymbol{2}3}}\bigr)
\bigl(-\tfrac{\Id}{\Id y_2}\bigr)\bigl(\underbrace{q^\dagger_{x_2}\,\exp(q_{x_2})}_{x_2}\bigr)
\cdot \tfrac{\Id^2}{\Id y_3^2} (\underbrace{\cos q}_{x_3}) +{}
\\
{}&{}\qquad\qquad 
+ \bigl(\underbrace{{}^{\langle 6\rangle}\ q q_{x_1x_1} }_{x_1}\bigr) \cdot
\bigl(-\tfrac{\Id}{\Id y_2}\bigr)\bigl(\underbrace{q^\dagger_{x_2}\,\exp(q_{x_2})}_{x_2}\bigr)
\cdot %\bigl(
\tfrac{\Id^2}{\Id y_3^2 } %\bigr) 
(\underbrace{-\sin q}_{x_3})
\Bigr\rangle\cdot
\underbrace{\langle \delta s(y_2),\delta s^\dagger(y_3)\rangle}_{+1} +{}
%\\
\end{align*}
\begin{align*}
{}&{}\quad
+\Bigl\langle
\bigl(\underbrace{{}^{\langle 7\rangle}\ q q_{x_1x_1} }_{x_1}\bigr) \cdot
\bigl(-\tfrac{\Id}{\Id z_{\boldsymbol{2}3} }\bigr)
\bigl(-\tfrac{\Id}{\Id y_2}\bigr)\bigl(\underbrace{\exp(q_{x_2})}_{x_2}\bigr)
\cdot (\underbrace{ q^\dagger_{x_3x_3}\cdot (-\sin q)}_{x_3}) +{}
\\
{}&{}\qquad\qquad 
{}+\bigl(\underbrace{{}^{\langle 8\rangle}\ q q_{x_1x_1} }_{x_1}\bigr) \cdot
\bigl(-\tfrac{\Id}{\Id y_2}\bigr) \bigl(\underbrace{\exp(q_{x_2})}_{x_2}\bigr)
\cdot (\underbrace{ q^\dagger_{x_3x_3}\cdot (-\cos q)}_{x_3})
\Bigr\rangle\cdot
\underbrace{\langle \delta s^\dagger(y_2),\delta s(y_3)\rangle}_{-1} \Bigr\}.
\end{align*}
On the other hand, $\lshad F,G\rshad={}$
\begin{multline*}%/-J.2.
\iiiint \Id y_1\,\Id y_2\,\Id x_1\,\dvol(x_2)\cdot
\Bigl\{
\Bigl\langle
\bigl(\underbrace{q^\dagger q_{x_1x_1} + 
\tfrac{\Id^2}{\Id y_1^2}(q^\dagger q)}_{x_1}\bigr) \cdot
\bigl(-\tfrac{\Id}{\Id y_2}\bigr)\bigl(\underbrace{\exp(q_{x_2})}_{x_2}\bigr)
\Bigr\rangle \cdot
\underbrace{\langle \delta s(y_1),\delta s^\dagger(y_2) \rangle}_{+1}
+{}\\
+\Bigl\langle \bigl(\underbrace{ q q_{x_1x_1} }_{x_1}\bigr) \cdot
\bigl(-\tfrac{\Id}{\Id y_2}\bigr)\bigl(\underbrace{q^\dagger_{x_2}\,\exp(q_{x_2})}_{x_2}\bigr) \Bigr\rangle \cdot 
\underbrace{\langle \delta s^\dagger(y_1),\delta s(y_2) \rangle}_{-1}
\Bigr\}.
\end{multline*}
We infer that $\lshad\lshad F,G\rshad,H\rshad={}$
\begin{align*}% /J.2, J.3. <9>--<14>.$
{}&\int\!\Id z_{12}\int\!\Id z_3
\int\!\Id y_1\int\!\Id y_2\int\!\Id x_1\int\!\Id x_2\int\!\dvol(x_3)\cdot
\underbrace{\langle \delta s(z_{12}),\delta s^\dagger(z_3)\rangle}_{+1}\cdot{}
\\
{}&\ \Bigl\{
\Bigl\langle \bigl( \underbrace{ {}^{\langle 9\rangle}\ %
\tfrac{\Id^2}{\Id z^2_{\boldsymbol{1}2}} (q^\dagger)+ 
{}^{\langle 10\rangle}\ %
\tfrac{\Id^2}{\Id y_1^2} (q^\dagger) }_{x_1} \bigr) \cdot
\bigl(-\tfrac{\Id}{\Id y_2}\bigr) \bigl(\underbrace{\exp(q_{x_2})}_{x_2}\bigr)
\cdot \tfrac{\Id^2}{\Id z_3^2 } (\underbrace{\cos q}_{x_3}) +{}
\\
{}&{}\qquad\quad 
+ \bigl(\underbrace{{}^{\langle 1\rangle}\ %
q^\dagger q_{x_1x_1} + {}^{\langle 2\rangle}\ %
\tfrac{\Id^2}{\Id y_1^2}(q^\dagger q)}_{x_1}\bigr) \cdot
\bigl(-\tfrac{\Id}{\Id z_{1\boldsymbol{2}}}\bigr)
\bigl(-\tfrac{\Id}{\Id y_2}\bigr)\bigl(\underbrace{\exp(q_{x_2})}_{x_2}\bigr)
\cdot \tfrac{\Id^2}{\Id z_3^2} (\underbrace{\cos q}_{x_3})
\Bigr\rangle \cdot
\underbrace{\langle \delta s(y_1),\delta s^\dagger(y_2)\rangle}_{+1} +{}
\\
{}&{}\quad
+\Bigl\langle
\bigl(\underbrace{{}^{\langle 11\rangle}\ %
q_{x_1x_1} + {}^{\langle 12\rangle}\ %
\tfrac{\Id^2}{\Id z_{\boldsymbol{1}2}^2}(q)}_{x_1}\bigr) \cdot
\bigl(-\tfrac{\Id}{\Id y_2}\bigr)\bigl(\underbrace{q^\dagger_{x_2}\,\exp(q_{x_2})}_{x_2}\bigr)
\cdot \tfrac{\Id^2}{\Id z_3^2} (\underbrace{\cos q}_{x_3}) + {}
\\
{}&{}\qquad\quad 
+ \bigl(\underbrace{{}^{\langle 5\rangle}\ q q_{x_1x_1} }_{x_1}\bigr) \cdot
\bigl(-\tfrac{\Id}{\Id z_{1\boldsymbol{2}}}\bigr)
\bigl(-\tfrac{\Id}{\Id y_2}\bigr)\bigl(\underbrace{q^\dagger_{x_2}\,\exp(q_{x_2})}_{x_2}\bigr)
\cdot \tfrac{\Id^2}{\Id z_3^2} (\underbrace{\cos q}_{x_3})
\Bigr\rangle \cdot
\underbrace{\langle \delta s^\dagger(y_1),\delta s(y_2)\rangle}_{-1}
\Bigr\} + {}
\\
{}&{}+\int\!\Id z_{12}\int\!\Id z_3
\int\!\Id y_1\int\!\Id y_2\int\!\Id x_1\int\!\Id x_2\int\!\dvol(x_3)\cdot
\underbrace{\langle \delta s^\dagger(z_{12}),\delta s(z_3)\rangle}_{-1}\cdot{}
\\
{}&\ \Bigl\{
\Bigl\langle
\bigl(\underbrace{{}^{\langle 13\rangle}\ %
q_{x_1x_1} + {}^{\langle 14\rangle}\ %
\tfrac{\Id^2}{\Id y_{1}^2}(q)}_{x_1}\bigr) \cdot
\bigl(-\tfrac{\Id}{\Id y_2}\bigr)\bigl(\underbrace{\exp(q_{x_2})}_{x_2}\bigr)
\cdot (\underbrace{ q^\dagger_{x_3x_3}\cdot (-\sin q)}_{x_3})
\Bigr\rangle \cdot
\underbrace{\langle \delta s(y_1),\delta s^\dagger(y_2)\rangle}_{+1} +{}
\\
{}&{}\quad
+\Bigl\langle
\bigl(\underbrace{{}^{\langle 7\rangle}\ q q_{x_1x_1} }_{x_1}\bigr) \cdot
\bigl(-\tfrac{\Id}{\Id z_{1\boldsymbol{2}} }\bigr)
\bigl(-\tfrac{\Id}{\Id y_2}\bigr)\bigl(\underbrace{\exp(q_{x_2})}_{x_2}\bigr)
\cdot (\underbrace{ q^\dagger_{x_3x_3}\cdot (-\sin q)}_{x_3})
\Bigr\rangle\cdot
\underbrace{\langle \delta s^\dagger(y_1),\delta s(y_2)\rangle}_{-1}
\Bigr\}.
\end{align*}
Thirdly, $\lshad F,H\rshad={}$
\begin{multline*}% /-J.3
\iiiint \Id y_1\,\Id y_3\,\Id x_1\,\dvol(x_3)\cdot
\Bigl\{
\Bigl\langle
\bigl(\underbrace{ q^\dagger q_{x_1x_1} + 
\tfrac{\Id^2}{\Id y_1^2}(q^\dagger q)}_{x_1}\bigr) \cdot
\tfrac{\Id^2}{\Id y_3^2} (\underbrace{\cos q}_{x_3})
\Bigr\rangle \cdot
\underbrace{\langle \delta s(y_1),\delta s^\dagger(y_3) \rangle}_{+1}
+{}\\
+\Bigl\langle 
\bigl(\underbrace{ q q_{x_1x_1} }_{x_1}\bigr) \cdot
(\underbrace{ q^\dagger_{x_3x_3}\cdot (-\sin q)}_{x_3})
\Bigr\rangle \cdot 
\underbrace{\langle \delta s^\dagger(y_1),\delta s(y_3) \rangle}_{-1}
\Bigr\}.
\end{multline*}
In view of the functionals' gradings, we have
$+1\cdot\lshad G,\lshad F,H\rshad\rshad={}$
\begin{align*}% /-J.4.
{}&\int\!\Id z_2\int\!\Id z_{13}
\int\!\Id y_1\int\!\Id y_3\int\!\Id x_1\int\!\Id x_2\int\!\dvol(x_3)\cdot
\underbrace{\langle \delta s(z_2),\delta s^\dagger(z_{13})\rangle}_{+1}\cdot{}
\\
{}&\ \Bigl\{\Bigl\langle
\bigl(-\tfrac{\Id}{\Id z_2}\bigr)\bigl(\underbrace{q^\dagger_{x_2}\,\exp(q_{x_2})}_{x_2}\bigr)
\cdot \bigl(\underbrace{{}^{\langle 11\rangle}\ %
q_{x_1x_1} + {}^{\langle 12\rangle}\ %
\tfrac{\Id^2}{\Id y_1^2}(q)}_{x_1}\bigr) \cdot
\tfrac{\Id^2}{\Id y_3^2} (\underbrace{\cos q}_{x_3})
\Bigr\rangle \cdot
\underbrace{\langle \delta s(y_1),\delta s^\dagger(y_3) \rangle}_{+1} +{}
\\
{}&{}\qquad\quad 
+\Bigl\langle
\bigl(-\tfrac{\Id}{\Id z_2}\bigr)\bigl(\underbrace{q^\dagger_{x_2}\,\exp(q_{x_2})}_{x_2}\bigr) \cdot 
\bigl(\underbrace{{}^{\langle 6\rangle}\ q q_{x_1x_1} }_{x_1}\bigr) \cdot
\tfrac{\Id^2}{\Id z_{1\boldsymbol{3}}^2 } (\underbrace{-\sin q}_{x_3})
\Bigr\rangle \cdot 
\underbrace{\langle \delta s^\dagger(y_1),\delta s(y_3) \rangle}_{-1}
\Bigr\} +{}
\\
{}&{}+\int\!\Id z_2\int\!\Id z_{13}
\int\!\Id y_1\int\!\Id y_3\int\!\Id x_1\int\!\Id x_2\int\!\dvol(x_3)\cdot
\underbrace{\langle \delta s^\dagger(z_2),\delta s(z_{13})\rangle}_{-1}\cdot{}
\\
{}&\ \Bigl\{\Bigl\langle
\bigl(-\tfrac{\Id}{\Id z_2}\bigr) \bigl(\underbrace{\exp(q_{x_2})}_{x_2}\bigr)
\cdot \bigl( \underbrace{ {}^{\langle 10\rangle}\ %
\tfrac{\Id^2}{\Id z^2_{\boldsymbol{1}3}} (q^\dagger)+ 
{}^{\langle 9\rangle}\ %
\tfrac{\Id^2}{\Id y_1^2} (q^\dagger) }_{x_1} \bigr) \cdot
\tfrac{\Id^2}{\Id y_3^2 } (\underbrace{\cos q}_{x_3}) +{}
\\
{}&{}\qquad\quad 
+ \bigl(-\tfrac{\Id}{\Id z_2}\bigr)\bigl(\underbrace{\exp(q_{x_2})}_{x_2}\bigr)
\cdot \bigl(\underbrace{{}^{\langle 3\rangle}\ %
q^\dagger q_{x_1x_1} + {}^{\langle 4\rangle}\ %
\tfrac{\Id^2}{\Id y_1^2}(q^\dagger q)}_{x_1}\bigr) \cdot
\tfrac{\Id^2}{\Id y_3^2 } (\underbrace{-\sin q}_{x_3})
\Bigr\rangle \cdot
\underbrace{\langle \delta s(y_1),\delta s^\dagger(y_3) \rangle}_{+1} +{}
\\
{}&{}\quad
+\Bigl\langle
\bigl(-\tfrac{\Id}{\Id z_2}\bigr)\bigl(\underbrace{\exp(q_{x_2})}_{x_2}\bigr)
\cdot \bigl(\underbrace{{}^{\langle 13\rangle}\ %
q_{x_1x_1} + {}^{\langle 14\rangle}\ %
\tfrac{\Id^2}{\Id z_{\boldsymbol{1}3}^2}(q)}_{x_1}\bigr) \cdot
(\underbrace{ q^\dagger_{x_3x_3}\cdot (-\sin q)}_{x_3}) + {}
\\
{}&{}\qquad\quad 
+ \bigl(-\tfrac{\Id}{\Id z_2}\bigr) \bigl(\underbrace{\exp(q_{x_2})}_{x_2}\bigr)
\cdot \bigl(\underbrace{{}^{\langle 8\rangle}\ q q_{x_1x_1} }_{x_1}\bigr) \cdot
(\underbrace{ q^\dagger_{x_3x_3}\cdot (-\cos q)}_{x_3})
\Bigr\rangle \cdot 
\underbrace{\langle \delta s^\dagger(y_1),\delta s(y_3) \rangle}_{-1}
\Bigr\}.
\end{align*}
Each term $\langle1\rangle$\,--\,$\langle8\rangle$ meets its match in the other side of~\eqref{Jacobi4Schouten}, whereas
terms $\langle9\rangle$\,--\,$\langle14\rangle$ occur in pairs of opposite signs; therefore, they all cancel out in the
r.-h.s.\ of the Jacobi identity.\\
\centerline{\rule{1in}{0.7pt}}

\begin{thebibliography}{7}

\bibitem{SQS11}
{Kiselev A. V.} On the variational noncommutative Poisson geometry //
{Physics of %Elementary 
Particles and %Atomic 
Nuclei}. %. Letters}, 
%Proc.\ Int.\ workshop SQS'11 `Supersymmetry and Quantum Symmetries'
%(July 18--23, 2011; JINR Dubna, Russia), 
%Fizika elementarnyh chastic i atomnogo yadra, 2012, T.43, vyp.5, 1282--1288.
2012. V.~{43}, n.5. P.~663--665. 
%\texttt{arXiv:1112.5784} [math-ph]

\bibitem{gvbv}
{Kiselev A. V.} 
The geometry of variations in Batalin\/--\/Vilkovisky formalism //
{J.~Phys.:\ Conf.\ Ser.} 2013. V.~{474}, n.012024. P.~1--51.
%Proc.\ XXI Int.\ conf.\ `Integrable Systems and Quantum Symmetries' (June 11--16, 2013; CVUT Prague, Czech Republic).
%\texttt{arXiv:1312.1262} [math-ph]

\bibitem{BV}
{Batalin I., Vilkovisky G.} 
Gauge algebra and quantization // {Phys.\ Lett.}~B.
1981. V.~102, n.1. P.~27--31;\quad%\\
%\bibitem{BV2}
{Batalin I. A., Vilkovisky G. A.}
Quantization of gauge theories with linearly dependent generators //
{Phys.\ Rev.}~D. 1983. V.~29, n.10. P.~2567--2582. 

%\bibitem{TwelveLectures}
%{Kiselev A. V.} The twelve lectures in the (non)\/commutative
%geometry of differential equations.
%{Preprint} $\smash{\text{IH\'ES}}$/M/12/13, Bures\/-\/sur\/-\/Yvette, 2012.
%140~p.

%\bibitem{Norway13}
%{Kiselev A. V., Krutov A. O.}
%Non\/-\/Abelian Lie algebroids over jet spaces.
%Preprint \texttt{arXiv:1305.4598} [math.DG], 2013.
%P.~1--23.

%\bibitem{LavrovTyutinJacobi13}
%{Lavrov P.~M., Radchenko O.~V., Tyutin~I.~V.}
%A note on the Jacobi identity.
%Preprint \texttt{arXiv:1304.5050} [math-ph], 2013.
%P.~1--11.

\bibitem{Galli10}
{Kiselev A. V., van de Leur J. W.} 
Variational Lie algebroids and homological evolutionary vector fields //
{Theor.\ Math.\ Phys.} 2011. V.~167, n.3. P.~772--784.
%\texttt{arXiv:math.DG/1006.4227}

%\bibitem{Lorentz12}
%{Kiselev A. V.} 
%The calculus of multivectors on noncommutative jet spaces.
%Preprint \texttt{arXiv:1210.0726} [math.DG], 2012.
%P.~1--16
\end{thebibliography}
\end{document}